\documentclass[aps,pra,reprint,superscriptaddress,nofootinbib,longbibliography]{revtex4-2}

\usepackage{amsmath,amssymb,amsthm,mathtools}
\usepackage{bm}
\usepackage{booktabs}
\usepackage{array,tabularx}
\usepackage{graphicx}
\usepackage{microtype}
\usepackage{xcolor}
\usepackage[
  colorlinks=true,
  linkcolor=cyan!70!black,
  citecolor=cyan!70!black,
  urlcolor=cyan!70!black
]{hyperref}
\usepackage[nameinlink,capitalise]{cleveref}

\newtheorem{theorem}{Theorem}

\newtheorem{proposition}[theorem]{Proposition}
\newtheorem{lemma}[theorem]{Lemma}
\theoremstyle{definition}

\theoremstyle{remark}
\newtheorem{remark}[theorem]{Remark}

\newcommand{\R}{\mathbb{R}}
\newcommand{\C}{\mathbb{C}}
\newcommand{\Tr}{\operatorname{Tr}}
\newcommand{\diag}{\operatorname{diag}}
\newcommand{\ket}[1]{\lvert #1\rangle}
\newcommand{\bra}[1]{\langle #1\rvert}
\newcommand{\norm}[1]{\left\lVert #1\right\rVert}
\newcommand{\abs}[1]{\left\lvert #1\right\rvert}

\newcommand{\AmplitudeBound}{2.94\times 10^{-4}}
\newcommand{\AmplitudeRatioMaximum}{9.0}
\newcommand{\AmplitudeRatioMinimum}{8.1}
\newcommand{\AncillaDimension}{3\,397}
\newcommand{\CarlemanDimension}{3\,402}

\newcommand{\CoherenceAmplitude}{2.80\times 10^{-4}}
\newcommand{\CoherenceEquationResidual}{3.7\times 10^{-14}}
\newcommand{\DampingCorrection}{3.3\times 10^{-14}}
\newcommand{\DeepGeometry}{$3\times1$ periodic strip}
\newcommand{\DeepOrderRatio}{17}
\newcommand{\DissipativityMaximum}{8.7\times 10^{-19}}
\newcommand{\EncodingCondition}{1795.3}

\newcommand{\EndpointCondition}{957.2}
\newcommand{\EndpointInvariance}{4.1\times 10^{-14}}
\newcommand{\FactorizationResidual}{0}
\newcommand{\FiniteSectionResidual}{4.2\times 10^{-16}}
\newcommand{\GeneratorIdentityResidual}{3.7\times 10^{-14}}
\newcommand{\GramResidual}{4.7\times 10^{-15}}
\newcommand{\HilbertDimension}{3\,403}
\newcommand{\IsometryResidual}{2.8\times 10^{-14}}
\newcommand{\JumpCount}{3\,394}
\newcommand{\KrausCompleteness}{2.8\times 10^{-14}}
\newcommand{\KrausCount}{3\,397}
\newcommand{\LogarithmResidual}{3.2\times 10^{-14}}
\newcommand{\MetricCondition}{3.2\times 10^{6}}
\newcommand{\MetricMinimumEigenvalue}{3.1\times 10^{-7}}
\newcommand{\NegativeRealCount}{264}
\newcommand{\NeutralDimension}{9}
\newcommand{\PersistenceResidual}{7.2\times 10^{-15}}
\newcommand{\PersistenceSteps}{10}
\newcommand{\PerturbationResidual}{9.4\times 10^{-17}}
\newcommand{\PopulationDimension}{81}
\newcommand{\PowerBound}{40.68}
\newcommand{\PrimaryGeometry}{$3\times3$ periodic lattice}
\newcommand{\ReferenceInvariance}{1.7\times 10^{-16}}
\newcommand{\RelaxationTime}{0.508}
\newcommand{\RunAgreement}{1.1\times 10^{-14}}
\newcommand{\RunAmplitude}{0.020}
\newcommand{\RunDensityError}{6.5\times 10^{-5}}
\newcommand{\RunMassResidual}{3.6\times 10^{-15}}
\newcommand{\RunNonlinearError}{1.8\times 10^{-4}}
\newcommand{\RunPositivityDeviation}{1.1\times 10^{-19}}
\newcommand{\RunSteps}{10}
\newcommand{\RunTraceDeviation}{4.4\times 10^{-16}}
\newcommand{\RunVelocityError}{4.1\times 10^{-4}}
\newcommand{\SemigroupChannelResidual}{1.1\times 10^{-16}}

\newcommand{\SmallestSingularValue}{1.240\times 10^{-2}}

\newcommand{\StableDimension}{3\,393}
\newcommand{\TotalShots}{4.3\times 10^{14}}
\newcommand{\TracePreservation}{0}
\newcommand{\TransportOrthogonality}{0}

\newcommand{\UnitModulusCount}{9}

\newcommand{\AmplitudeTableBody}{%
0.005 & $2.7\times 10^{-6}$ & $1.0\times 10^{-3}$ \\
0.010 & $2.2\times 10^{-5}$ & $4.2\times 10^{-3}$ \\
0.020 & $1.8\times 10^{-4}$ & $1.7\times 10^{-2}$ \\
0.040 & $1.6\times 10^{-3}$ & $7.5\times 10^{-2}$ \\
}

\newcommand{\OrderTableBody}{%
1 & 81 & $1.3\times 10^{-3}$ \\
2 & 3\,402 & $1.8\times 10^{-4}$ \\
}

\newcommand{\DeepOrderTableBody}{%
1 & 27 & $6.9\times 10^{-4}$ \\
2 & 405 & $4.0\times 10^{-5}$ \\
3 & 4\,059 & $1.7\times 10^{-6}$ \\
}

\newcommand{\ScalingTableBody}{%
$1\times 1$ & 9 & 54 & $<0.01$ \\
$2\times 2$ & 36 & 702 & $<0.01$ \\
$3\times 3$ & 81 & 3\,402 & 0.17 \\
$4\times 4$ & 144 & 10\,584 & 1.67 \\
$8\times 8$ & 576 & 166\,752 & 414.35 \\
$16\times 16$ & 2\,304 & 2\,657\,664 & 105249.63 \\
}

\begin{document}

\title{Autonomous Lindblad Realizability of Nonunitary Linear Dynamics with a Carleman Lattice Boltzmann Application}
\author{Muhammad Idrees Khan}
\email{idreesm@chalmers.se}
\affiliation{Department of Mechanical Engineering, Chalmers University of Technology, Gothenburg, Sweden}

\begin{abstract}
    Carleman lifting converts nonlinear polynomial dynamics into finite linear
    systems, but the resulting truncations are generally nonunitary and need not
    correspond to physical quantum evolution. We prove that a finite linear
    endpoint admits an autonomous Gorini--Kossakowski--Sudarshan--Lindblad (GKSL)
    realization on vacuum coherences if and only if it is invertible and power
    bounded. The construction is explicit and realizes the nonunitary map directly
    as open-system dynamics, with no endpoint postselection and with one encoding
    and one decoding over repeated timesteps. We apply the result to the complete
    D2Q9 multiple-relaxation-time lattice Boltzmann (LB) timestep by compiling collision
    and periodic streaming into a single Carleman endpoint. The resulting GKSL
    evolution reproduces the classical Carleman trajectory over multiple timesteps,
    while the remaining discrepancy from nonlinear LB dynamics is
    the expected Carleman truncation error. The result establishes a general
    criterion for autonomous open-quantum realization of finite nonunitary
    dynamics, with Carleman--LB dynamics as a concrete example.
\end{abstract}

\maketitle
\section{Introduction}

Simulating nonlinear fluid dynamics on quantum hardware faces an immediate
structural obstruction because closed-system quantum evolution is linear and
unitary, whereas the Navier--Stokes and lattice Boltzmann (LB) equations are
not. Carleman linearization~\cite{Carleman1932,ForetsPouly2017} offers a way to
handle the nonlinearity by evolving monomials of the state instead of the state
itself. Truncating the resulting infinite hierarchy at order $K$ produces a
finite matrix $A_K$, making the dynamics linear at the cost of a modelling
error.

Carleman lifting relocates the obstruction rather than removing it. The finite
Carleman section $A_K$ is generally nonunitary and therefore cannot in general
serve directly as the evolution operator of a closed quantum system. Previous
Carleman quantum algorithms have addressed this by embedding the nonunitary
map into a larger unitary construction. Timestep-by-timestep schemes may
require measurement and renewed state preparation, while block-encoding
schemes recover the desired nonunitary action from a selected ancilla sector.
The associated postselection probability can then become a serious obstacle
to repeated evolution.

Open quantum systems offer a different route because nonunitary reduced
dynamics is itself physical. This leads to a general question about finite
linear dynamics. Given a finite linear endpoint $A$, when can it be realized
exactly by autonomous Gorini--Kossakowski--Sudarshan--Lindblad (GKSL)
dynamics~\cite{GoriniEtAl1976,Lindblad1976}? The answer depends only on the endpoint
rather than the method used to obtain it. Related Lindbladian approaches have
also been developed for nonlinear problems \cite{ChoiEtAl2026LHAM}.

The central result of this work is a characterization and explicit construction
of autonomous open-quantum realizations for finite linear dynamics. We show
that a finite real linear endpoint admits the proposed vacuum-coherence GKSL
realization exactly when it is invertible and power bounded. The criterion is
formulated entirely at the level of the finite endpoint and therefore applies
independently of the method used to obtain that endpoint. For every admissible
map, the proof constructs a time-independent Hamiltonian and Lindblad jump
operators whose GKSL evolution realizes the endpoint over repeated timesteps,
together with the corresponding finite-time completely positive trace-preserving (CPTP) channel, 
Kraus representation and Stinespring dilation.

This construction provides a direct open-system representation of nonunitary
dynamics. Rather than recovering the desired map from a selected sector of a
larger unitary evolution, the logical variables are carried by vacuum
coherences and evolve according to the required linear dynamics while the full
density operator remains physical. The same time-independent GKSL generator
can be used over successive timesteps, so the state is encoded once and
decoded once after the evolution, without intermediate measurement,
re-encoding or classical feedback.

We then specialize the general result to the complete D2Q9
multiple-relaxation-time (MRT) LB timestep. Carleman lifting converts
the nonlinear collision dynamics into a finite linear representation, while
the open-system construction addresses the nonunitarity of that representation.
Collision and periodic streaming are first composed into a single global
Carleman endpoint, and the realizability conditions are verified for the
complete collide-and-stream operator.

The result concerns physical realizability rather than implementation
complexity. In the present construction, the generator is obtained from a
dense matrix logarithm with cubic classical cost in the Carleman dimension,
and its Lindblad representation generally contains a number of nonlocal jump
operators that grows with that dimension. The final readout also remains
subject to the coherence-amplitude and sampling costs discussed below. These
costs are separate from the structural result. For admissible endpoints, 
the evolution proceeds autonomously between
successive timesteps and avoids postselection of a unitary block.

\section{Finite Carleman sections in the truncated jet algebra}
\label{sec:jet}

To make finite Carleman truncation precise, we formulate the finite section as
the action on polynomials modulo terms of degree higher than $K$. Let $F$ be a
polynomial self-map of $\R^d$ with $F(0)=0$, and let $\mathfrak{m}$ denote the
polynomials with vanishing constant term at the origin. We use the truncated
nonconstant jet algebra
\begin{equation}
J_K = \mathfrak{m}/\mathfrak{m}^{K+1},
\end{equation}
whose graded basis consists of the monomials of degrees $1$ through $K$. Let
$\Phi_K(z)$ denote the corresponding vector of monomial values. The finite
Carleman matrix $C_K(F)$ is the induced action

\begin{equation}
\Phi_K(F(z)) = C_K(F)\,\Phi_K(z) \pmod{\mathfrak{m}^{K+1}}.
\label{eq:carleman-convention}
\end{equation}
Since $F(0)=0$, substitution by $F$ preserves $\mathfrak{m}^{K+1}$, so this
finite action is well defined.

\begin{proposition}[Multiplicativity]
\label{prop:mult}
For polynomial maps $F,G$ fixing the origin,
\begin{equation}
  \begin{aligned}
  C_K(F\circ G) &= C_K(F)\,C_K(G),\\
  C_K(F^n) &= C_K(F)^n .
  \end{aligned}
  \label{eq:mult}
  \end{equation}
\end{proposition}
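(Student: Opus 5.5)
The plan is to read $C_K(F)$ as the matrix of a linear operator on $J_K$ and to show that this operator is contravariantly functorial in $F$; multiplicativity will then come from the order in which the monomial vectors are written. For a polynomial map $F$ with $F(0)=0$, let $F^*\colon J_K\to J_K$ be the pullback $p\mapsto p\circ F \bmod \mathfrak{m}^{K+1}$. This is well defined: $F(0)=0$ gives $F^*\mathfrak{m}\subseteq\mathfrak{m}$, and since $F^*$ is a ring homomorphism it also gives $F^*\mathfrak{m}^{K+1}\subseteq\mathfrak{m}^{K+1}$, as the paper notes. Write $\Phi_K=(\phi_1,\dots,\phi_N)$ for the monomial basis. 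Then \eqref{eq:carleman-convention} says exactly that $F^*\phi_i=\sum_j C_K(F)_{ij}\,\phi_j$ in $J_K$, so $C_K(F)$ is the transpose of the matrix of $F^*$ in this basis. Equivalently, componentwise, $\Phi_K\circ F\equiv C_K(F)\Phi_K$.

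The key step is a direct computation in $J_K$. Use $(F\circ G)^*=G^*\circ F^*$, which is just associativity of composition, $p\circ(F\circ G)=(p\circ F)\circ G$, and holds at the level of truncations because $G^*$ preserves $\mathfrak{m}^{K+1}$. Then
\begin{equation*}
\begin{aligned}
\Phi_K\circ(F\circ G)
&= (\Phi_K\circ F)\circ G
\equiv \bigl(C_K(F)\Phi_K\bigr)\circ G \\
&= C_K(F)\,(\Phi_K\circ G)
\equiv C_K(F)\,C_K(G)\,\Phi_K .
\end{aligned}
\end{equation*}
The second step applies $G^*$ to the congruence $\Phi_K\circ F\equiv C_K(F)\Phi_K$. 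This is legitimate because $G^*$ maps $\mathfrak{m}^{K+1}$ into itself, so congruences modulo $\mathfrak{m}^{K+1}$ survive substitution by $G$. The third step uses that the entries of $C_K(F)$ are constants, so they commute with substitution.

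Comparing the result with the defining relation for $F\circ G$ gives $C_K(F\circ G)\Phi_K\equiv C_K(F)C_K(G)\Phi_K$. The monomials of degree $1$ through $K$ form a basis of $J_K$, so their classes are linearly independent and the coefficient matrices agree. This is where uniqueness of $C_K$ enters, and it is really the only subtle point. The power identity follows by induction on $n$ from $F^n=F\circ F^{n-1}$. For $n=0$, take $F^0=\mathrm{id}$, which gives $C_K(\mathrm{id})=I$.

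I expect the main obstacle to be bookkeeping rather than mathematics. One has to make sure the truncation is compatible with substitution, which is exactly the hypothesis that both maps fix the origin. One also has to get the order of the factors right: the row-vector convention in \eqref{eq:carleman-convention} makes the contravariant pullback come out covariant, $C_K(F)C_K(G)$ rather than $C_K(G)C_K(F)$. I will check the order on the degree-one block, where $C_K$ restricts to the Jacobians at $0$ and the chain rule $D(F\circ G)(0)=DF(0)\,DG(0)$ confirms it.
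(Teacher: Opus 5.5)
Your argument is correct and is essentially the paper's proof: substitute $G$ into the defining congruence $\Phi_K\circ F\equiv C_K(F)\Phi_K$, apply the defining congruence for $G$, and iterate for powers. You also make explicit two points that the paper's two-line proof leaves implicit: substitution by a map fixing the origin preserves $\mathfrak{m}^{K+1}$, so the congruence survives, and linear independence of the monomial classes in $J_K$ makes the coefficient matrix unique.
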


\begin{proof}
  Using \cref{eq:carleman-convention},
  \begin{align*}
  \Phi_K\bigl((F\circ G)(z)\bigr)
   &= C_K(F)\,\Phi_K\bigl(G(z)\bigr)\\
   &= C_K(F)\,C_K(G)\,\Phi_K(z)
   \qquad \text{in }J_K .
  \end{align*}
  Iteration gives the second identity.
\end{proof}

Two consequences organize everything that follows. First, \cref{eq:mult} is
\emph{exact} in $J_K$. It is an identity, not an approximation, and must never be
conflated with the error of the truncation against the original nonlinear map.
Those two errors are measured separately throughout \Cref{sec:results}. Second,
because $C_K(F)$ is block upper triangular with the symmetric powers of the
linear part on the diagonal,
\begin{equation}
  \begin{aligned}
  C_2(F)
   &=
   \begin{pmatrix}
   L & Q\\
   0 & L^{\odot 2}
   \end{pmatrix},\\
  F(z) &= Lz+Q(z,z).
  \end{aligned}
  \label{eq:blocks}
  \end{equation}
  the finite section inherits its spectrum from $L$. The eigenvalues of $C_K(F)$
  are the products of at most $K$ eigenvalues of $L$.

\begin{remark}[Affine maps]
\label{rem:affine}
For a general nonlinear polynomial map with a constant term,
\Cref{prop:mult} can fail because translation does not preserve
$\mathfrak{m}^{K+1}$. Appending the constant monomial gives an exact finite
representation for purely affine maps, but does not repair the general
nonlinear case under degree truncation. Two general repairs are recentering
on a fixed point of the driven problem and introducing a genuine dynamical
coordinate $s$ with
\begin{equation}
s^+ = s, \qquad f^+ = bs + Af + Q(f,f),
\end{equation}
which fixes the origin in the augmented variables, followed by restriction to the
physical slice $s=1$. Augmentation also changes the spectral structure. The extra coordinate contributes
the eigenvalue one, so if $A$ also has eigenvalue one and
$b\notin\operatorname{ran}(A-I)$, the augmented operator acquires a Jordan block
on the unit circle and leaves the hypotheses of \Cref{thm:global}.
\end{remark}

\section{The global realization theorem}
\label{sec:theorem}

\subsection{Encoding}

Fix a Hilbert space
\begin{equation}
\mathcal{H} = \operatorname{span}\{\ket{0}\}\oplus\C^{m},
\qquad h = m+1,
\end{equation}
and store the logical state in the coherences between the vacuum $\ket 0$ and the
excited block. For an embedding $J$ of full column rank, a positive definite
$\Sigma$ and a scale $\kappa>0$, set
\begin{equation}
  v(z) = \frac{\Sigma Jz}{\kappa},
\qquad
E(z) = \frac{I}{h} + \ket{v(z)}\bra{0} + \ket{0}\bra{v(z)} .
\label{eq:encoding}
\end{equation}

For a density matrix $\rho$ on $\mathcal H$, let $v_\rho\in\C^m$ denote the
excited-block coordinates of $\Pi_{\mathrm{exc}}\rho\ket0$, where
$\Pi_{\mathrm{exc}}=I-\ket0\bra0$. On the encoded subspace we use the linear
decoding map
\begin{equation}
  \operatorname{decode}(\rho)=\kappa\,(\Sigma J)^\dagger v_\rho,
\label{eq:decode}
\end{equation}
where $(\Sigma J)^\dagger$ denotes the Moore--Penrose inverse. Since $\Sigma J$ has full
column rank, $(\Sigma J)^\dagger \Sigma J=I$, and therefore
$\operatorname{decode}(E(z))=z$ exactly on the encodable set.

\begin{lemma}[Positivity budget]
\label{lem:positivity}
$E(z)$ is Hermitian with unit trace for every $\kappa$, and it is positive
semidefinite if and only if $\norm{v(z)}_2\le 1/h$. Consequently
$\kappa\ge h\sup_z\norm{\Sigma Jz}_2$ suffices on any bounded set of states.
\end{lemma}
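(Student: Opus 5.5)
The plan is a direct spectral computation: $E(z)$ is a rank-two Hermitian perturbation of the maximally mixed state $I/h$, and everything reduces to the two nontrivial eigenvalues of that perturbation.

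\emph{Hermiticity and trace.} The structural claims come first and do not involve $\kappa$. The operator $\ket{v}\bra{0}+\ket{0}\bra{v}$ is its own adjoint, and $I/h$ is Hermitian, so $E(z)$ is Hermitian. For the trace, $v(z)$ lies in the excited block $\C^m$, which is orthogonal to $\ket0$, so $\bra{0}v\rangle=0$. Hence
\begin{equation*}
\Tr\bigl(\ket v\bra0\bigr)=\Tr\bigl(\ket0\bra v\bigr)=\bra{0}v\rangle=0,
\end{equation*}
and therefore $\Tr E(z)=\Tr(I)/h=1$.

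\emph{Spectrum and the positivity condition.} If $v=0$, then $E(z)=I/h\ge0$ and the condition $\norm{v}_2\le 1/h$ holds trivially. Otherwise set $\hat v=v/\norm{v}_2$, so that $\{\ket0,\ket{\hat v}\}$ is an orthonormal pair. On this pair the perturbation acts as the matrix
\begin{equation*}
\norm{v}_2\begin{pmatrix}0&1\\1&0\end{pmatrix},
\end{equation*}
and it vanishes on the orthogonal complement, which has dimension $h-2$. The eigenvalues of $E(z)$ are therefore $1/h+\norm{v}_2$, $1/h-\norm{v}_2$, and $1/h$ with multiplicity $h-2$. It follows that $E(z)\ge0$ if and only if $1/h-\norm{v}_2\ge0$, which is the stated criterion.

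\emph{The sufficient scale.} On a bounded set of states $z$, the quantity $S=\sup_z\norm{\Sigma Jz}_2$ is finite, because $\Sigma J$ is a fixed finite matrix. Choosing $\kappa\ge hS$ gives
\begin{equation*}
\norm{v(z)}_2=\frac{\norm{\Sigma Jz}_2}{\kappa}\le\frac{S}{hS}=\frac1h
\end{equation*}
for every $z$ in the set, so $E(z)\ge0$ by the previous step. If $S=0$, then $v\equiv0$ and any $\kappa>0$ works.

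\emph{Expected obstacle.} There is no substantive difficulty. The only points needing care are to use the orthogonality $v\perp\ket0$ explicitly in both the trace and the eigenvalue computation, and to treat the degenerate case $v=0$ separately.
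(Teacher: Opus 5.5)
Your proof is correct and follows the same route as the paper. The perturbation $\ket{v}\bra{0}+\ket{0}\bra{v}$ is traceless and rank two on the plane spanned by $\ket 0$ and $v$, so the eigenvalues are $1/h\pm\norm{v}_2$ there and $1/h$ on the orthogonal complement; you additionally spell out what the paper leaves implicit (the explicit $2\times2$ block, the case $v=0$, and the derivation of the scale $\kappa\ge hS$).
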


\begin{proof}
The added term is Hermitian, off diagonal, and of rank two, supported on the
plane spanned by $\ket 0$ and $v$. On that plane the eigenvalues are
$1/h\pm\norm{v}$, and on its orthogonal complement they equal $1/h$. The trace of
the added term vanishes.
\end{proof}

\subsection{Statement}

\begin{theorem}[Global GKSL realizability]
\label{thm:global}
Let $A\in\R^{m\times m}$ and $\tau>0$. The following are equivalent.
\begin{enumerate}
\item[(i)] There exist a generator $G\in\C^{m\times m}$, an embedding $J$ of full
column rank and a Hermitian $P\succ0$ such that
\begin{equation}
e^{\tau G}J = JA,
\qquad
PG + G^{*}P \preceq 0 .
\label{eq:hyp}
\end{equation}
\item[(ii)] $A$ is invertible and power bounded, that is, $\rho(A)\le1$ and every
eigenvalue of $A$ of modulus one is semisimple.
\end{enumerate}
Moreover, when these hold there exist a time-independent Hamiltonian
$H=H^{*}$, finitely many jump operators $L_\ell$, and a scale $\kappa$ such that
the GKSL semigroup generated by $(H,\{L_\ell\})$ satisfies, for every $n\in\mathbb{N}$
and every $z$ in the encodable set,
\begin{equation}
\text{decode}\bigl(e^{n\tau\mathcal{L}}E(z)\bigr) = A^{n}z ,
\label{eq:persistence}
\end{equation}
with one encoding, one decoding, and no intervention in between. The same data
determine a Kraus family $\{K_a\}$ with $\sum_a K_a^{*}K_a=I$ realizing the same
endpoint.
\end{theorem}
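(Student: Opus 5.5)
The plan is to prove the two implications separately and then build the GKSL data explicitly from a pair $(G,P)$ satisfying \cref{eq:hyp}. The dissipativity condition $PG+G^*P\preceq0$ will do double duty. It is the Lyapunov certificate that gives power boundedness, and after a change of metric it is the condition that makes the generator a legitimate coherence block of a Lindbladian.

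\emph{(i)$\Rightarrow$(ii).} Since $J$ has full column rank, $J^\dagger J=I$. The identity $e^{\tau G}J=JA$ says $\operatorname{ran}J$ is $e^{\tau G}$-invariant and $A=J^\dagger e^{\tau G}J$ is the restriction of $e^{\tau G}$ to that subspace. Iterating gives $e^{n\tau G}J=JA^n$.
\begin{itemize}
\item Invertibility: $e^{\tau G}$ is invertible, and its restriction to an invariant subspace is injective. Hence $A$ is injective and therefore invertible.
\item Boundedness of powers: for $x(t)=e^{tG}x_0$ we have $\tfrac{d}{dt}\,x^*Px=x^*(PG+G^*P)x\le0$, so $\norm{e^{tG}}_P\le1$ in the norm induced by $P$.
\item Conclusion: $\norm{A^n}\le\norm{J^\dagger}\,\kappa(P)^{1/2}\norm{J}$ uniformly in $n$. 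The standard Jordan-form argument converts power boundedness into $\rho(A)\le1$ with every unimodular eigenvalue semisimple, since a nontrivial Jordan block on the unit circle makes $\norm{A^n}$ grow like $n$.
\end{itemize}

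\emph{(ii)$\Rightarrow$(i).} Take $J=I$ and $G=\tau^{-1}\log A$, with the logarithm built blockwise from a Jordan decomposition $A=S(A_u\oplus A_s)S^{-1}$. Here $A_u$ is diagonal with unimodular entries, which is possible by semisimplicity, and $\rho(A_s)<1$. Since $A$ is invertible, the logarithm exists, possibly complex, which \cref{eq:hyp} allows.
\begin{itemize}
\item On the unimodular block, $\log A_u$ is diagonal and purely imaginary.
\item On the contracting block, every eigenvalue of $\log A_s$ has real part $\log\abs{\lambda}<0$, so $\log A_s$ is Hurwitz.
\item For the metric, set $P=S^{-*}(I\oplus P_s)S^{-1}$, where $P_s\succ0$ solves the Lyapunov equation $P_sG_s+G_s^*P_s=-I$ for the Hurwitz block $G_s$.
\item Check: in the $S$-coordinates the unimodular block is skew-Hermitian, contributing $0$, and the contracting block contributes $-I$, so $PG+G^*P\preceq0$.
\end{itemize}
I expect this to be the main technical point. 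The care needed is in making the logarithm respect the block splitting and in confirming that semisimplicity is exactly what makes the neutral block skew in a suitable metric.

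\emph{Construction.} Set $\Sigma=P^{1/2}$ and $\tilde G=\Sigma G\Sigma^{-1}$, so that $\tilde G+\tilde G^*\preceq0$. Decompose $\tilde G=-iH_0-\tfrac12D$ with $H_0$ Hermitian and $D=-(\tilde G+\tilde G^*)\succeq0$, and factor $D=\sum_\ell w_\ell w_\ell^*$, with at most $m$ terms.

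On $\mathcal H$ take $H=0\oplus H_0$ and jumps $L_\ell=\ket0\bra{w_\ell}$, with $w_\ell$ in the excited block. We track the component $\Pi_{\mathrm{exc}}\rho\ket0$ term by term:
\begin{itemize}
\item The commutator gives $-iH_0v_\rho$, because $H\ket0=0$.
\item The recycling term $L_\ell\rho L_\ell^*$ lands in $\ket0\bra0$ and does not touch the coherence.
\item The anticommutator gives $-\tfrac12Dv_\rho$, because $L_\ell\ket0=0$.
\end{itemize}
Hence $\dot v_\rho=\tilde Gv_\rho$ exactly, independently of the rest of $\rho$. Therefore $v_{\rho(n\tau)}=\Sigma e^{n\tau G}Jz/\kappa=\Sigma JA^nz/\kappa$, and \cref{eq:decode} returns $A^nz$.

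By \cref{lem:positivity}, a choice of $\kappa$ makes $E(z)$ a state. GKSL form then guarantees the evolution is CPTP for all $t$, so no intervention is needed between steps. The Kraus family follows from the Choi decomposition, or a Stinespring dilation, of the finite-time channel $e^{\tau\mathcal L}$, and trace preservation gives $\sum_aK_a^*K_a=I$.
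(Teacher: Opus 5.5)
Your proposal is correct and is essentially the paper's proof. It uses the $P$-norm contraction and Jordan-block argument for (i)$\Rightarrow$(ii), and a neutral/stable splitting with a Lyapunov metric on the stable part and a diagonalizing metric on the semisimple neutral part for (ii)$\Rightarrow$(i). It also uses the same $\Sigma=P^{1/2}$, the same Hermitian/dissipative split, and the same rank-one jumps $\ket0\bra{w_\ell}$ yielding $\dot v_\rho=\tilde G v_\rho$.

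The differences are mechanical. You block-diagonalize $A$ first and take the logarithm blockwise, whereas the paper takes an arbitrary logarithm and decouples it afterwards by an ordered Schur form and a Sylvester equation, which works for any branch and avoids an ill-conditioned Jordan form. You obtain the Kraus family abstractly from Choi's theorem, whereas the paper writes it explicitly as $K_0=\ket0\bra0\oplus e^{\tau\tilde G}$, $K_\ell=\ket0 s_\ell$ with $\sum_\ell s_\ell^{*}s_\ell=I-e^{\tau\tilde G^{*}}e^{\tau\tilde G}$.
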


\begin{proof}
\emph{(i) $\Rightarrow$ (ii).} Write $\norm{x}_P^2 = x^{*}Px$. Dissipativity
gives $\frac{d}{dt}\norm{e^{tG}x}_P^2 = (e^{tG}x)^{*}(PG+G^{*}P)(e^{tG}x)\le0$,
so $e^{tG}$ is a contraction in the metric norm for all $t\ge0$. Then
$\norm{JA^{n}z}_P = \norm{e^{n\tau G}Jz}_P\le\norm{Jz}_P$, and since $J$ is
injective with closed range the powers $A^{n}$ are uniformly bounded. A bounded
family of powers forces $\rho(A)\le1$ and excludes nontrivial Jordan blocks on
the unit circle, because such a block makes $\norm{A^{n}}$ grow
polynomially~\cite{HornJohnson2013}.
Invertibility follows because $e^{\tau G}$ is invertible and $J$ is injective.
If $JAz=0$, then $z=0$.

\emph{(ii) $\Rightarrow$ (i).} Take $J=I$. Since $A$ is invertible, it has a logarithm. We set
$G=\tau^{-1}\log A$ for any such choice. Invertibility guarantees the existence
of a logarithm while leaving the branch choice open. For the endpoints in \Cref{sec:lbm}, 
the principal branch is
unavailable because the spectrum contains eigenvalues on the closed negative
real axis, where the principal logarithm has its branch cut
(see \Cref{rem:branch}). The construction is insensitive to which branch is taken. 
Every logarithm of an eigenvalue $\lambda$ has real part $\log\abs\lambda$, 
so power boundedness places the spectrum of $G$ in the closed left half plane 
regardless of the branch choice. The imaginary-axis part is semisimple because it 
comes from the semisimple unit-modulus eigenvalues of $A$. Let $\delta>0$ separate the
strictly stable eigenvalues from the neutral ones, and take an ordered Schur 
decomposition $G=U_0TU_0^{*}$ with
\begin{equation}
  \begin{aligned}
  T &=
  \begin{pmatrix}
  T_s & X\\
  0   & T_0
  \end{pmatrix},\\
  \operatorname{Re}\operatorname{spec}(T_s) &< -\delta,
  \qquad
  \operatorname{Re}\operatorname{spec}(T_0)=0 .
  \end{aligned}
  \label{eq:schur-blocks}
  \end{equation}
  The Sylvester equation
  \begin{equation*}
  T_sY-YT_0=-X
  \end{equation*}
  has a unique solution because the two spectra are disjoint. Define
  \begin{equation*}
  \begin{aligned}
  W &=
  \begin{pmatrix}
  I & Y\\
  0 & I
  \end{pmatrix},
  &
  U &= U_0W ,
  \end{aligned}
  \end{equation*}
  so that
  \begin{equation*}
  G=U\,\diag(T_s,T_0)\,U^{-1}.
  \end{equation*}
  
  On the stable block, the Lyapunov equation
  \begin{equation*}
  T_s^{*}P_s+P_sT_s=-I
  \end{equation*}
  has a unique solution $P_s\succ0$. The neutral block is semisimple
  with purely imaginary spectrum, so
  \begin{equation*}
  T_0=V\Lambda V^{-1},
  \qquad
  P_0=(V^{-1})^{*}V^{-1}.
  \end{equation*}
  It follows that
  \begin{equation*}
  \begin{aligned}
  P_0T_0+T_0^{*}P_0
  &=(V^{-1})^{*}
    (\Lambda+\Lambda^{*})
    V^{-1}\\
  &=0.
  \end{aligned}
  \end{equation*}
  Setting
  \begin{equation*}
  P=(U^{-1})^{*}\diag(P_s,P_0)U^{-1}\succ0
  \end{equation*}
  therefore gives
  \begin{equation*}
  \begin{aligned}
  PG+G^{*}P
  &=(U^{-1})^{*}\diag(-I,0)U^{-1}\\
  &\preceq0,
  \end{aligned}
  \end{equation*}
  which is \cref{eq:hyp}.

\emph{Construction of the GKSL data.} Let $\Sigma=P^{1/2}$ and
$D=\Sigma G\Sigma^{-1}$.
Congruence by $\Sigma^{-1}$ turns dissipativity into
\begin{equation}
  \begin{aligned}
  \Gamma
   &\coloneqq -(D+D^{*}) \succeq 0,\\
  H
   &\coloneqq \frac{i}{2}(D-D^{*})=H^{*},\\
  D
   &= -iH-\frac{1}{2}\Gamma .
  \end{aligned}
  \label{eq:split}
  \end{equation}
Factor $\Gamma=\sum_\ell r_\ell^{*}r_\ell$ into row vectors and set
$L_\ell=\ket{0}r_\ell$ on $\mathcal{H}$, with the Hamiltonian $0\oplus H$. Writing
a density matrix in blocks as
$\rho=\begin{psmallmatrix}a & v^{*}\\ v & X\end{psmallmatrix}$, and using
$L_\ell\rho L_\ell^{*}\propto\ket0\bra0$ and
$\sum_\ell L_\ell^{*}L_\ell=0\oplus\Gamma$, the generator acts blockwise as
\begin{equation}
  \begin{aligned}
  \dot a &= \Tr(\Gamma X),\\
  \dot v &= Dv,\\
  \dot X &= DX+XD^{*}.
  \end{aligned}
  \label{eq:blockflow}
  \end{equation}
The middle equation is the assertion that the logical data evolve by the drift.

This is the key distinction from a unitary block encoding. The logical vector
is not required to remain a normalized pure-state amplitude vector. Its norm
may contract exactly as the nonunitary classical dynamics requires, while the
full density operator continues to evolve as a normalized positive quantum
state. Dissipation is therefore carried by the physical open-system dynamics
rather than converted into a postselection probability.

Since $e^{\tau D}\Sigma = \Sigma A$ by construction,
\cref{eq:persistence} follows for every
$n$, with $\kappa$ chosen by \Cref{lem:positivity}.

\emph{Kraus family.} Let $R=e^{\tau D}$. Dissipativity gives
\[
\frac{d}{dt}\norm{e^{tD}x}^2
=
-\bigl(e^{tD}x\bigr)^{*}\Gamma\bigl(e^{tD}x\bigr)
\le 0,
\]
hence $R^{*}R\preceq I$ and
$M\coloneqq I-R^{*}R\succeq0$ factors as $\sum_\ell s_\ell^{*}s_\ell$. Then
\begin{equation}
  \begin{aligned}
  K_0 &= \ket0\bra0\oplus R,\\
  K_\ell &= \ket0 s_\ell,\\
  K_0^{*}K_0+\sum_\ell K_\ell^{*}K_\ell
   &= \ket0\bra0\oplus(R^{*}R+M)\\
   &= I .
  \end{aligned}
  \label{eq:kraus}
  \end{equation}
and the channel maps $v\mapsto Rv$, which is what the semigroup does at time
$\tau$. The Stinespring isometry $V=\sum_a K_a\otimes\ket a$ satisfies
$V^{*}V=I$, which is \cref{eq:kraus} again. Complete positivity and the
dilation follow from the standard correspondences
~\cite{Stinespring1955,Choi1975,Kraus1983}.
\end{proof}

\subsection{Interpretation and Consequences}

\begin{remark}[Why the drift is complex]
\label{rem:complex}
The generator in \Cref{thm:global} is complex, which costs nothing because the
Hilbert space is complex. Insisting on a real generator would require a real
logarithm of a real matrix. Such a logarithm exists only when, for each negative
real eigenvalue, the Jordan blocks of every size occur an even number of times.
For the MRT parameter set used here, listed in
\Cref{app:d2q9-conventions}, all nonconserved relaxation multipliers
$1-s_k$ are negative. A uniformly applicable real construction would therefore
require additional realification machinery, potentially enlarging the state
space, with no physical benefit. Keeping the drift complex avoids this
unnecessary restriction.
\end{remark}

\begin{remark}[Logarithm branches and phases]
\label{rem:branch}
A negative endpoint eigenvalue $\lambda<0$ lies on the cut of the principal
logarithm, so it must be assigned $\log\abs\lambda+i\theta$ with $\theta$ an odd
multiple of $\pi$, and no branch makes that choice canonical. This occurs in the compiled endpoint: on the \PrimaryGeometry\ at order two, \NegativeRealCount\ of the \CarlemanDimension\ eigenvalues lie on the negative
real axis. The principal matrix logarithm is therefore undefined for this endpoint.

The realizability argument is independent of the logarithm branch because
$\operatorname{Re}\log\lambda=\log\abs\lambda$ for every branch. Different
branch choices can change the continuous generator and its
Hamiltonian--dissipative decomposition while producing the same endpoint.
Negative real endpoint eigenvalues are represented through complex phase in
the generator rather than through negative probabilities. The compiled
generator is verified by the residual
$\norm{e^{\tau G}-A}/\norm{A}$ reported in \Cref{tab:residuals}.
\end{remark}

\begin{remark}[Excluded endpoints]
\label{rem:excluded}
A singular endpoint cannot be written as $e^{\tau G}$ for any finite matrix
generator $G$. Such endpoints therefore fall outside the autonomous realization
of \Cref{thm:global}. This is not an edge case. 
As \Cref{sec:lbm} shows, it excludes $\tau_\nu=1$, a common parameter choice in LB practice. 
Endpoints with $\rho(A)>1$ are excluded because no positive metric can make the generator dissipative, 
and endpoints with a nontrivial Jordan block on the unit circle are excluded because their powers
grow without bound.
\end{remark}

\begin{remark}[One interval per timestep]
\Cref{thm:global} realizes the complete finite endpoint in one interval
$\tau$. In the LB application, collision and streaming are first composed
into the single endpoint $A_K$, so one GKSL interval corresponds to one
complete collide-and-stream timestep. The construction therefore concerns
the complete timestep rather than separate continuous-time realizations of
its collision and streaming stages.
\end{remark}

\section{Application: the complete lattice Boltzmann timestep}
\label{sec:lbm}

\subsection{The endpoint}

The model considered here is the D2Q9 LB timestep with
MRT collision and a fixed-reference-density
equilibrium~\cite{LallemandLuo2000,Succi2018}:
\begin{equation}
F_{\mathrm{LB}} = P_S\circ F_C ,
\label{eq:step}
\end{equation}
collision followed by streaming. The wrap that makes streaming a bijection of
each axis \emph{is} the periodic boundary condition, so there is no separate
boundary stage to compose. The precise population order, moment transform,
equilibrium moments, relaxation spectrum, flat lattice indexing and degree-two
monomial convention used in every numerical result are collected in
\Cref{app:d2q9-conventions}. Together, these conventions uniquely determine the compiled endpoint. 
Two structural facts make \Cref{thm:global} applicable.

\begin{proposition}[Exact quadratic closure]
\label{prop:quadratic}
At fixed reference density the polynomial equilibrium makes $F_C$ an exactly
quadratic, site-local map of the populations. Consequently, about any state
$\bar f$ with $F_{\mathrm{LB}}(\bar f)=\bar f$, the perturbation map
$\delta f\mapsto F_{\mathrm{LB}}(\bar f+\delta f)-\bar f$ is exactly quadratic,
\begin{equation}
\delta f \mapsto L\,\delta f + Q(\delta f,\delta f),
\end{equation}
with no truncation, where $L$ includes the cross term $2B(\bar f,\cdot)$.
\end{proposition}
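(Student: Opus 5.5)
The plan is to make the collision map explicit in moment space, show that every nonlinearity is a quadratic form, and then observe that streaming and recentering do not raise the degree. Write the MRT collision at a single site as
\[
F_C(f)=f-M^{-1}S\bigl(Mf-m^{\mathrm{eq}}(Mf)\bigr),
\]
where $M$ is the invertible moment transform and $S$ is the diagonal relaxation matrix (both fixed in \Cref{app:d2q9-conventions}). The first step is to write out the equilibrium moments. At fixed reference density $\rho_0$ each component of $m^{\mathrm{eq}}$ is a polynomial in the conserved moments $\rho$, $j_x$, $j_y$. These are linear functionals of $f$. In the fixed-reference-density (incompressible) form the velocity enters through $j/\rho_0$ rather than $j/\rho$, so there is no division by the dynamical density. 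Each equilibrium moment is therefore an affine-linear function of $(\rho,j)$ plus a term $j_\alpha j_\beta/\rho_0$. It follows that $F_C$ is a polynomial map of degree exactly two in $f$. Because $M$, $S$ and $m^{\mathrm{eq}}$ act on the nine populations of one node only, $F_C$ is also site-local.

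The second step is to compose with streaming. $P_S$ is a permutation of the population entries, namely periodic shifts along each velocity. It is linear and bijective, so $F_{\mathrm{LB}}=P_S\circ F_C$ remains a polynomial of degree at most two. We can then write
\[
F_{\mathrm{LB}}(f)=c+L_0f+B(f,f),
\]
where $B$ is a symmetric bilinear map obtained by polarizing the quadratic part, and $c$ is zero or a constant depending on whether the equilibrium has constant terms.

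The third step is to recenter. Let $\bar f$ be a fixed point. Expanding gives
\[
F_{\mathrm{LB}}(\bar f+\delta f)-\bar f
=\bigl(c+L_0\bar f+B(\bar f,\bar f)-\bar f\bigr)+\bigl(L_0\delta f+2B(\bar f,\delta f)\bigr)+B(\delta f,\delta f).
\]
The constant bracket is $F_{\mathrm{LB}}(\bar f)-\bar f=0$. Setting $L=L_0+2B(\bar f,\cdot)$ and $Q=B$, and noting that a degree-two polynomial has no higher terms, the expansion terminates exactly. Hence no truncation is involved.

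The only step that needs real care is the first. We must check from the stated equilibrium moments that no nonpolynomial dependence appears: no $1/\rho$ and no higher powers of $u$ from a cubic equilibrium correction. This is where the fixed-reference-density convention is essential. I would verify it component by component against the equilibrium moments listed in the appendix. The rest is bookkeeping, including the factor $2$ coming from the symmetry of $B$.
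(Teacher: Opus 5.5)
Your proposal is correct and follows essentially the same route as the paper: you read off from the fixed-reference-density equilibrium that $m^{\mathrm{eq}}$ is linear in $(\rho,j)$ plus a $j_\alpha j_\beta/\rho_0$ quadratic form, conclude that $F_C$ is site-local and exactly quadratic, and then use bilinearity, symmetry and the fixed-point condition to cancel the constant term. The only differences are cosmetic. The paper expands $F_C$ about $\bar f$ before applying the linear $P_S$, giving $L=P_S[L_0+2B(\bar f,\cdot)]$ and $Q=P_S\circ B$, whereas you compose first; and your hedged constant $c$ is in fact zero, because no equilibrium moment has a constant term.
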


\begin{proof}
Let $M$ be the moment transform, $m=Mf$, and $S=\diag(s_k)$ the relaxation
rates, all of which act within a single site. Their explicit values and ordering
  are given in \Cref{app:d2q9-conventions}. Each equilibrium moment is the sum
of a term linear in the density $\rho=m_0$ and the momenta $j_x=m_3$, $j_y=m_5$,
and a term quadratic in the momenta alone, the latter divided by the fixed
reference density $\rho_0$ and not by $\rho$. Write $m^{\mathrm{eq}} = \Lambda m +
\mathcal{B}_{\mathrm{eq}}(m,m)$ accordingly, with
$\mathcal{B}_{\mathrm{eq}}$ symmetric bilinear. Since $\rho$ and $j$ are
linear functionals of $f$ and $\rho_0$ is a constant, collision
\begin{equation}
  \begin{aligned}
  F_C(f)
   &= M^{-1}\bigl[(I-S)+S\Lambda\bigr]Mf\\
   &\quad +M^{-1}S\,\mathcal{B}_{\mathrm{eq}}(Mf,Mf)\\
   &= L_0f+B(f,f).
  \end{aligned}
  \label{eq:collision}
  \end{equation}
is a polynomial of degree exactly two with no constant term, and $B$ is site
local because every factor in \cref{eq:collision} is. Replacing $\rho_0$ by $\rho$ 
would make the velocity $j/\rho$ rational, so the
expansion would no longer terminate at finite order. The fixed reference
density is what gives the exact quadratic closure.
For the second claim let $P=P_S$, which is linear. Bilinearity and symmetry of
$B$ give, with no remainder,
\begin{equation}
  \begin{aligned}
  F_C(\bar f+\delta f)
   &= F_C(\bar f)\\
   &\quad+\bigl[L_0+2B(\bar f,\cdot)\bigr]\delta f\\
   &\quad+B(\delta f,\delta f).
  \end{aligned}
  \end{equation}
Apply $P$ and subtract $\bar f = PF_C(\bar f)$, which is the fixed-point
hypothesis. What remains is $L\,\delta f + Q(\delta f,\delta f)$ with
$L=P[L_0+2B(\bar f,\cdot)]$ and $Q=P\circ B$, carrying no term of degree three or
higher and sending $0$ to $0$.
\end{proof}

Expanding about a genuine fixed point is essential. The Carleman construction
requires a map that fixes the origin, and the correct first-order block is obtained 
by expanding about the fixed point. For the uniform rest reference used below, 
$\bar j_x=\bar j_y=0$, so $B(\bar f,\cdot)=0$ and the collision polynomial 
in perturbation coordinates coincides with $F_C$ itself.

\begin{proposition}[Streaming is a permutation]
\label{prop:transport}
Periodic streaming is a permutation of the population space. Each population is
carried one lattice vector along its own direction and wraps at the edge of its
axis, while its direction remains unchanged.
\end{proposition}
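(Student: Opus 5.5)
The plan is to write the streaming operator explicitly on the flat population index and show that it is a bijection of the coordinate index set, acting by relabelling coordinates with unit coefficients. Index the population space by triples $(x,y,i)$, where $(x,y)\in\mathbb{Z}_{N_x}\times\mathbb{Z}_{N_y}$ is a lattice site and $i\in\{0,\dots,8\}$ is a D2Q9 direction with integer lattice vector $c_i=(c_{ix},c_{iy})\in\{-1,0,1\}^2$, as fixed in \Cref{app:d2q9-conventions}. Periodic streaming is then
\begin{equation*}
(P_S f)_{(x,y,i)} = f_{(x-c_{ix}\bmod N_x,\;y-c_{iy}\bmod N_y,\;i)} ,
\end{equation*}
that is, $P_S f = f\circ\pi^{-1}$ for the index map $\pi(x,y,i)=(x+c_{ix}\bmod N_x,\,y+c_{iy}\bmod N_y,\,i)$.

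The key step is to verify that $\pi$ is a bijection of the finite index set. For fixed $i$, the map $\pi$ is the translation of the group $\mathbb{Z}_{N_x}\times\mathbb{Z}_{N_y}$ by $c_i$ reduced modulo the lattice. Its inverse is translation by $-c_i$, so each restriction is bijective. Since $\pi$ preserves the direction label $i$, it is the disjoint union of these nine bijections, one on each direction layer, and is therefore a bijection of the whole index set. Consequently $P_S$ is the permutation matrix $\sum_{(x,y,i)}\ket{\pi(x,y,i)}\bra{(x,y,i)}$. Each row and each column contains exactly one entry equal to one, the matrix is real orthogonal with $P_S^{-1}=P_S^{\mathsf T}$, and it is linear with no constant term. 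This justifies treating $P_S$ as linear in \Cref{prop:quadratic}.

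The only delicate point is bookkeeping rather than mathematics. The wrap must be taken modulo the side length along each axis separately, including diagonal directions, where both coordinates wrap simultaneously, and the result must agree with the flat indexing convention of the appendix. I would handle this by writing the flat index as $i+9(x+N_xy)$, or whatever ordering the appendix specifies, and checking that the modular translation is the same bijection in that ordering. The statement that the direction is unchanged is immediate, because $\pi$ fixes the third component. As a remark, I would note that $C_K(P_S)$ is then block diagonal, with blocks given by the induced permutations of monomials. So streaming alone is unitary and contributes only unit-modulus, semisimple spectrum, consistent with \Cref{thm:global}.
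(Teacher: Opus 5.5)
Your proof is correct and follows the paper's argument: for each fixed direction, translation by $c_q$ is a bijection of the cyclic site group, the direction label is preserved, so the index map is a bijection of a finite set and $P_S$ is a permutation matrix. Your version makes the layer-by-layer structure explicit (fix the direction, then translate), which the paper's one-line proof leaves implicit. The flat ordering is immaterial, since relabelling the index set only conjugates a permutation matrix by another permutation.
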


\begin{proof}
On each axis the shift by a fixed integer is a bijection of a cyclic group, so
the map on sites is bijective. The direction is unchanged, so the map on the
product of sites and directions is bijective as well. A bijective map of a finite
set is a permutation.
\end{proof}

Periodic streaming conserves global mass and global momentum exactly, because no
population ever changes direction. Combining
\Cref{prop:quadratic,prop:transport} with \Cref{prop:mult} gives the global
operator
\begin{equation}
A_K = C_K(F_{\mathrm{LB}}) = C_K(P_S)\,C_K(F_C),
\label{eq:global}
\end{equation}
an identity that holds exactly in the finite Carleman section.

\subsection{Verifying the hypotheses}

The hypotheses of \Cref{thm:global} are now facts about $A_K$ to be checked.

\emph{Invertibility.} Collision is invertible precisely when no relaxation
multiplier $1-s_k$ vanishes. Streaming is a permutation, hence orthogonal. 
The multipliers vanish at $\tau_\nu=1$, where the shear moments relax to
equilibrium in a single step, so that parameter choice makes $A_K$ singular and
lies outside the theorem by \Cref{rem:excluded}. We use $\tau_\nu=\RelaxationTime$. 
The full relaxation spectrum is listed in \Cref{app:d2q9-conventions}.

\emph{Power boundedness.} This condition requires a separate verification.
Streaming is a permutation, so its eigenvalues lie on the unit circle, while
the conserved moments are unrelaxed. The compiled endpoint therefore contains
neutral modes inherited from conservation and periodic transport. On the
\PrimaryGeometry\ the audit finds \UnitModulusCount\ unit-modulus eigenvalues,
all semisimple. Semisimplicity of the conserved $+1$ sector is protected structurally. 
The quadratic collision source has vanishing density and momentum moments, so
the conserved-mode functionals annihilate it and prevent Jordan coupling
between degree blocks. The complete unit-modulus sector is also
checked numerically in the compiled endpoint.

\section{Results}
\label{sec:results}

The primary numerical setting is a \PrimaryGeometry\ at order two, with
\PopulationDimension\ populations, Carleman dimension \CarlemanDimension\ and
Hilbert dimension \HilbertDimension. Accuracy measurements use geometries with
at least one axis of length three or greater, as explained in
\Cref{rem:degenerate}.

\subsection{The chain of identities}

\begin{table*}[t]
\centering
\caption{Residuals along the construction, on the \PrimaryGeometry\ at order two.
The first group reports exact algebraic identities, while the second reports
the compiled realization.}
\label{tab:residuals}
\begin{tabular}{lr}
\toprule
Quantity & Residual\\
\midrule
Stationarity of the reference state & $\ReferenceInvariance$\\
Quadratic perturbation map against the nonlinear timestep & $\PerturbationResidual$\\
Factorization $A_K=C_K(P_S)C_K(F_C)$ & $\FactorizationResidual$\\
Orthogonality of streaming, $\norm{P_S^{\mathsf T}P_S-I}$ & $\TransportOrthogonality$\\
Finite-section identity $C_K(F^n)=A_K^n$ & $\FiniteSectionResidual$\\
\midrule
Logarithm residual, $\norm{e^{\tau G}-A}/\norm{A}$ & $\LogarithmResidual$\\
Dissipativity, largest eigenvalue of $PG+G^{*}P$ & $\DissipativityMaximum$\\
Jump reconstruction $\sum_\ell r_\ell^{*}r_\ell=\Gamma$ & $\GramResidual$\\
Generator identity $D=-iH-\Gamma/2$ & $\GeneratorIdentityResidual$\\
Vacuum-coherence equation $\dot v=Dv$ & $\CoherenceEquationResidual$\\
Kraus completeness $\sum_a K_a^{*}K_a=I$ & $\KrausCompleteness$\\
Stinespring isometry $V^{*}V=I$ & $\IsometryResidual$\\
Trace preservation & $\TracePreservation$\\
Endpoint invariance $e^{\tau D}\Sigma J=\Sigma JA$ & $\EndpointInvariance$\\
Semigroup against channel & $\SemigroupChannelResidual$\\
\bottomrule
\end{tabular}
\end{table*}

\begin{figure*}[t]
\centering
\includegraphics[width=0.45\textwidth]{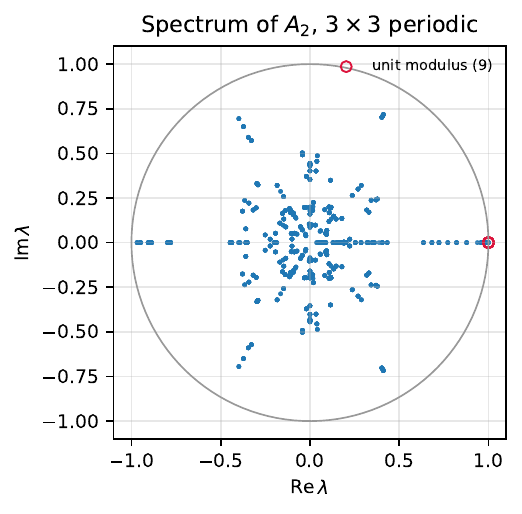}
\hfill
\includegraphics[width=0.45\textwidth]{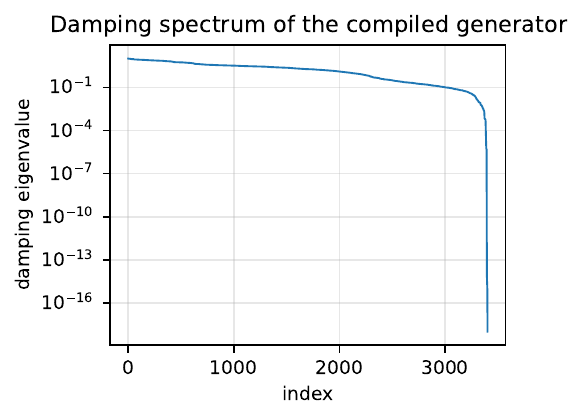}
\caption{Left: spectrum of the global endpoint $A_2$ on the \PrimaryGeometry\
against the unit circle. The \UnitModulusCount\ unit-modulus eigenvalues are neutral Carleman modes,
including conserved modes and degree-two product modes, and are all
semisimple, which is what makes the endpoint power bounded. The right panel shows the eigenvalues of the damping matrix $\Gamma$ of the compiled generator, which are nonnegative up to floating-point roundoff. The
neutral directions correspond to the unit-modulus eigenvalues shown in the left panel.}
\label{fig:spectrum}
\end{figure*}

The compiled endpoint has smallest singular value $\SmallestSingularValue$ and
condition number \EndpointCondition, spectral radius equal to $1$ to numerical
precision, and \UnitModulusCount\ unit-modulus eigenvalues, all semisimple. The
largest dyadic-power norm sampled in the numerical audit is \PowerBound.
\Cref{fig:spectrum} shows the spectrum and the
corresponding damping eigenvalues. The metric separates a stable block of dimension
\StableDimension\ from a neutral block of dimension \NeutralDimension, and has
minimum eigenvalue $\MetricMinimumEigenvalue$ and condition number
$\MetricCondition$. The realization uses \JumpCount\ jump operators and
\KrausCount\ Kraus operators, with an ancilla of dimension \AncillaDimension.
The positive-semidefinite projection applied to $\Gamma$ has norm
$\DampingCorrection$ and is reported explicitly. The endpoint channel is
independently checked through Kraus completeness.

\subsection{Autonomous evolution without intervention}

The initial population field is constructed from small random perturbations
of the density and velocity about the uniform rest reference, together with a
small nonequilibrium population perturbation so that collision is active.
Its amplitude is \RunAmplitude. The resulting positive population field is
encoded once and evolved for \RunSteps\ timesteps. The decoded state agrees with the classical
Carleman trajectory $A_K^{n}z$ to $\RunAgreement$, while the trace deviates by
$\RunTraceDeviation$ and the smallest eigenvalue of the evolved density matrix
stays above $-\RunPositivityDeviation$. An independent persistence audit at
steps $1,2,4,8,\PersistenceSteps$ gives a maximum disagreement of
$\PersistenceResidual$. The number of encodings is one and the number of
decodings is one.

Against the nonlinear MRT reference the same simulation has final
population error $\RunNonlinearError$, density error $\RunDensityError$, velocity
error $\RunVelocityError$, and mass residual $\RunMassResidual$. This error is
the modelling error of the truncation and is entirely distinct from the exact
identities of \Cref{tab:residuals}.

\subsection{Cubic scaling of the truncation error}

\begin{table}[t]
\centering
\caption{Truncation error against nonlinear MRT after \RunSteps\ timesteps on the
\PrimaryGeometry\ at order two, as a function of perturbation amplitude.}
\label{tab:amplitude}
\begin{tabular}{lrr}
\toprule
Amplitude & Absolute error & Relative error\\
\midrule
\AmplitudeTableBody
\bottomrule
\end{tabular}
\end{table}

\begin{figure}[t]
\centering
\includegraphics[width=\columnwidth]{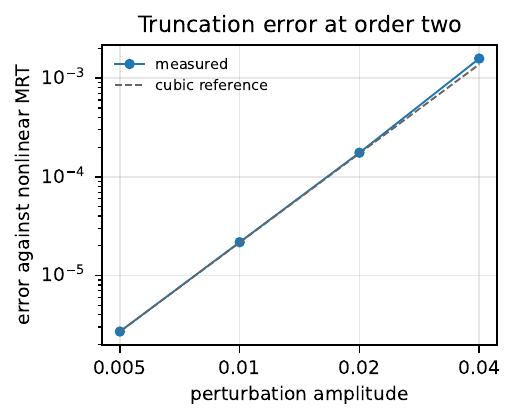}
\caption{Truncation error against the nonlinear reference after \RunSteps\
timesteps, with a cubic reference line. The measured slope follows the cubic
prediction, consistent with the leading truncation error arising from omitted
degree-three terms rather than from the GKSL realization.}
\label{fig:amplitude}
\end{figure}

Halving the amplitude reduces the error by a factor between
\AmplitudeRatioMinimum\ and \AmplitudeRatioMaximum, which is the cubic scaling
predicted by an order-two truncation whose first omitted term is of degree three,
as \Cref{fig:amplitude} shows. Raising the order improves the result, on the
primary geometry and on a smaller one where a third order is affordable.

\begin{table}[t]
\centering
\caption{Order sweeps at amplitude \RunAmplitude\ after \RunSteps\ timesteps.
Order three on the \PrimaryGeometry\ would have Carleman dimension $95\,283$,
so the third order is measured on a \DeepGeometry, which is affordable and is
not degenerate in the sense of \Cref{rem:degenerate}.}
\label{tab:order}
\begin{tabular}{lrr}
\toprule
Order $K$ & Dimension $d_K$ & Final error\\
\midrule
\multicolumn{3}{l}{\emph{\PrimaryGeometry}}\\
\OrderTableBody
\midrule
\multicolumn{3}{l}{\emph{\DeepGeometry}}\\
\DeepOrderTableBody
\bottomrule
\end{tabular}
\end{table}

Each additional order gains at least a factor of \DeepOrderRatio\ on the strip,
which is the behaviour expected of a convergent truncation and is the reason the
degenerate geometry of \Cref{rem:degenerate} must be excluded: a case that
reports zero error at order two cannot exhibit this.

\begin{remark}[Degenerate validation geometry]
\label{rem:degenerate}
A $2\times2$ periodic lattice reports zero truncation error at order two, and
the reason is lattice geometry rather than accuracy. The quadratic collision
source is even under velocity inversion, so its populations satisfy
$w_q=w_{\bar q}$. On an axis of length one or two the shifts by $c_q$ and
$-c_q$ coincide, so the local momentum of the streamed source cancels pair by
pair. Since the source re-enters the dynamics only through the local momenta,
every term of degree three and higher vanishes identically and the order-two
truncation becomes exact. The same accident afflicts every periodic lattice whose
every axis has length at most two, so the smallest lattices are exactly the ones
that cannot be used for validation. An axis of length three or more separates the
two shifts and breaks the cancellation, which is why a \PrimaryGeometry, and not
a $2\times2$ lattice, is the geometry used throughout \Cref{sec:results}.
\end{remark}

\section{Computational Cost and Limitations}
\label{sec:cost}

\begin{table*}[t]
\centering
\caption{Dimensions required at order two. The last column is the dense density
matrix in gibibytes.}
\label{tab:scaling}
\begin{tabular}{lrrr}
\toprule
Lattice & Populations & Carleman dimension & Density matrix\\
\midrule
\ScalingTableBody
\bottomrule
\end{tabular}
\end{table*}

Two costs bound the construction. Compilation is a dense classical matrix
logarithm and Schur reordering, both $O(d_K^3)$, so even the
\CarlemanDimension-dimensional endpoint used here is computationally demanding
in a dense implementation. Readout is worse in principle. \Cref{lem:positivity} caps 
the coherence amplitude at $1/h$, measured here at $\CoherenceAmplitude$ against the bound
$\AmplitudeBound$. The positivity budget implies $\abs{v_j}\le\norm{v}_2\le 1/h$ for every
coherence component. To provide a conservative resource indicator, we use the
variance-one proxy
\[
N_{\mathrm{proxy}}=\frac{1}{(\epsilon a)^2},
\]
with relative precision $\epsilon=0.01$ and amplitude scale $a=\norm{v}_2$.
For $a\sim1/h$, this proxy scales as $h^2$. Using the measured norm
$\CoherenceAmplitude$ as an optimistic amplitude scale gives approximately
$1.3\times10^{11}$ shots per component for the reported run. Multiplying this
proxy by all $d_K$ logical components gives approximately $\TotalShots$ shots.
These figures are conservative resource indicators rather than
measurement-independent lower bounds or measured hardware costs. Individual
components smaller than $\norm{v}_2$ require more shots. Decoding error is
additionally amplified by the condition number of $\Sigma J$, here
\EncodingCondition.

We therefore make no claim of quantum advantage. \Cref{thm:global} establishes
that the complete LB timestep admits a physically legitimate
open-system realization with a single encoding and a single decoding, and
identifies exactly which endpoints do not. It does not establish that evaluating
that realization is cheap, and the figures above are given so that the
distinction cannot be misread.

A second limitation is worth stating plainly. We give no analytic bound on the
error of the order-$K$ truncation against the original nonlinear dynamics.
Bounds of that kind exist for dissipative quadratic systems under a small-data
condition~\cite{ForetsPouly2017}, and connecting them to the LB setting,
where the relevant contraction is in a moment-dependent metric, is left open.
The truncation error is measured here, not bounded.

\section{Relation to existing work}

Carleman linearization has been used in quantum algorithms as a systematic
route from nonlinear classical dynamics to a larger linear problem. The
resulting finite operator is generally nonunitary, and quantum implementations
have therefore relied on unitary embeddings such as linear combinations of
unitaries and block encoding
\cite{LiuEtAl2023,SanavioSucci2024,SanavioEtAl2025}. In Carleman lattice
Boltzmann dynamics the distinction is particularly clear because streaming is
unitary whereas collision and relaxation are dissipative and nonunitary.
Earlier timestep-by-timestep Carleman LB constructions require measurement
and renewed state preparation between updates. Block-encoding constructions
instead embed the nonunitary Carleman matrix into a larger unitary, with the
desired action associated with a selected ancilla sector. The corresponding
postselection probability can make repeated evolution impractical. For
endpoints satisfying \Cref{thm:global}, the present construction realizes the
endpoint directly through open-system dynamics and therefore avoids
postselection of a unitary block.

Related work has shown that the dissipative MRT relaxation block can be
realized deterministically by a CPTP channel without block-encoding
postselection \cite{KhanSucciYao2026}. That construction treats the MRT
relaxation block itself, whereas the present work treats the complete finite
Carleman endpoint and characterizes its autonomous GKSL realizability. Recent
work has also combined Lindbladian dynamics with homotopy analysis for
nonlinear partial differential equations \cite{ChoiEtAl2026LHAM}, using
homotopy analysis to obtain an enlarged linear problem before applying open
quantum dynamics.

The present result is formulated directly at the level of a finite linear
endpoint. \Cref{thm:global} shows that the proposed autonomous GKSL realization
exists exactly when the endpoint is invertible and power bounded. The
criterion is independent of how the finite endpoint is obtained. In the
present application, Carleman lifting produces the endpoint for the complete
D2Q9 MRT collide-and-stream timestep. For an admissible endpoint, dissipation
is represented directly by the open quantum dynamics. The logical vector may 
contract while the complete density operator remains
normalized and positive. Repeated applications are generated by the same time-independent
GKSL dynamics, with autonomous evolution between the initial encoding and the
final readout.

These properties are structural consequences of the realization theorem.
The construction removes the postselection and intermediate state-preparation
requirements associated with unitary embeddings of admissible endpoints.
Questions of hardware cost and quantum speedup remain separate. The
dimensional growth of the Carleman lift, the classical cost of compiling the
generator, the complexity of realizing its jump operators and the final
readout cost remain implementation challenges. The contribution here is the
realizability result, which identifies precisely when a finite nonunitary
endpoint can be represented by the autonomous open-system construction.

\section{Conclusion}

A finite linear endpoint admits the proposed autonomous GKSL realization on
vacuum coherences exactly when it is invertible and power bounded. This gives
a necessary and sufficient matrix-level criterion rather than a
problem-specific quantum embedding. For admissible endpoints the construction
produces the Hamiltonian, the jump operators, the Kraus channel and the
Stinespring dilation associated with the same finite map.

Carleman lifting and the open-system theorem address two different structural
problems. Carleman lifting converts the nonlinear classical dynamics into a
finite linear endpoint. The GKSL construction then determines whether that
nonunitary endpoint can itself be represented by autonomous physical open
quantum dynamics. For the complete D2Q9 LB timestep, collision and periodic
streaming are compiled into one Carleman endpoint and the resulting open
evolution reproduces the classical Carleman trajectory to $\RunAgreement$
over \RunSteps\ timesteps from a single encoding. The evolution proceeds 
autonomously across timesteps from a single encoding
to the final decoding. The realizability criterion is independent of how the
finite linear endpoint is obtained. Carleman lifting provides that endpoint
in the D2Q9 application considered here, while \Cref{thm:global} applies at
the level of the endpoint itself.

The parameter choice $\tau_\nu=1$ makes the endpoint singular and therefore
falls outside the autonomous realization characterized by \Cref{thm:global},
and every periodic lattice whose axes all have length at most two makes the order-two
truncation exact by geometry, so those cases cannot be used as evidence of accuracy. 

\appendix
\section{D2Q9 conventions and explicit construction of the finite endpoint}
\label{app:d2q9-conventions}

This appendix records the conventions needed to reconstruct the lattice
Boltzmann endpoint used in the numerical results. It provides the D2Q9/MRT
data underlying \cref{eq:collision,eq:global}.

\subsection{Population order, velocities and moment transform}

At each lattice site the populations are ordered as
\begin{equation}
q=(0,1,2,3,4,5,6,7,8)
=(\mathrm{rest},\mathrm{E},\mathrm{N},\mathrm{W},\mathrm{S},
\mathrm{NE},\mathrm{NW},\mathrm{SW},\mathrm{SE}),
\end{equation}
with velocities and weights
\begin{equation}
\begin{array}{c|ccccccccc}
q&0&1&2&3&4&5&6&7&8\\
\hline
c_{qx}&0&1&0&-1&0&1&-1&-1&1\\
c_{qy}&0&0&1&0&-1&1&1&-1&-1\\
w_q&\frac49&\frac19&\frac19&\frac19&\frac19&
\frac1{36}&\frac1{36}&\frac1{36}&\frac1{36}
\end{array}.
\label{eq:appendix-d2q9}
\end{equation}

The moment vector is ordered as
\begin{equation}
m=(\rho,e,\epsilon,j_x,q_x,j_y,q_y,p_{xx},p_{xy})^{\mathsf T}=Mf,
\end{equation}
using the d'Humi\`eres--Lallemand--Luo basis~\cite{LallemandLuo2000}
\begin{equation}
M=\begin{pmatrix}
1&1&1&1&1&1&1&1&1\\
-4&-1&-1&-1&-1&2&2&2&2\\
4&-2&-2&-2&-2&1&1&1&1\\
0&1&0&-1&0&1&-1&-1&1\\
0&-2&0&2&0&1&-1&-1&1\\
0&0&1&0&-1&1&1&-1&-1\\
0&0&-2&0&2&1&1&-1&-1\\
0&1&-1&1&-1&0&0&0&0\\
0&0&0&0&0&1&-1&1&-1
\end{pmatrix}.
\label{eq:appendix-M}
\end{equation}

\subsection{Fixed-reference-density MRT collision}

The collision step is
\begin{equation}
m^+=(I-S)m+S m^{\mathrm{eq}}(m),
\qquad
f^+=M^{-1}m^+,
\label{eq:appendix-collision}
\end{equation}
with the fixed-reference-density equilibrium
\begin{equation}
m^{\mathrm{eq}}=
\begin{pmatrix}
\rho\\
-2\rho+3(j_x^2+j_y^2)/\rho_0\\
\rho-3(j_x^2+j_y^2)/\rho_0\\
j_x\\
-j_x\\
j_y\\
-j_y\\
(j_x^2-j_y^2)/\rho_0\\
j_xj_y/\rho_0
\end{pmatrix}.
\label{eq:appendix-equilibrium}
\end{equation}

Thus the equilibrium consists exactly of linear and homogeneous quadratic
terms in the moments. Its polynomial degree is therefore two. In all reported runs
\begin{equation}
\rho_0=1,
\qquad
S=\diag(0,s_e,s_\epsilon,0,s_q,0,s_q,s_\nu,s_\nu),
\end{equation}
with
\begin{equation}
  \begin{aligned}
  s_e &=1.19,
  & s_\epsilon &=1.40,\\
  s_q &=1.20,
  & \tau_\nu &=0.508=\frac{127}{250},\\
  && s_\nu &=\tau_\nu^{-1}=\frac{250}{127}.
  \end{aligned}
  \label{eq:appendix-rates}
  \end{equation}

The corresponding lattice kinematic viscosity is
\begin{equation}
\nu=\frac{\tau_\nu-\frac12}{3}
=\frac{1}{375}
\approx 2.667\times10^{-3}.
\end{equation}

The corresponding collision multipliers are
\begin{equation}
  \begin{aligned}
  1-s_k
  ={}&\biggl(
  1,-0.19,-0.40,1,-0.20,\\
  &\qquad
  1,-0.20,-\frac{123}{127},-\frac{123}{127}
  \biggr).
  \end{aligned}
  \label{eq:appendix-multipliers}
  \end{equation}

The stationary reference used for the numerical perturbation expansion is the
uniform rest equilibrium with density one,
\begin{equation}
\bar u=(0,0),
\qquad
\bar f_q=w_q
\end{equation}
at every site. Hence $F_{\mathrm{LB}}(\bar f)=\bar f$ up to the reported
floating-point stationarity residual.

\subsection{Periodic streaming and flat indexing}

For an $n_x\times n_y$ periodic lattice let
\begin{equation}
  \begin{aligned}
  \alpha(x,y,q)&=9(xn_y+y)+q,\\
  0&\le x<n_x,
  \qquad 0\le y<n_y.
  \end{aligned}
  \end{equation}
be the flat population index. Streaming sends a population at $(x,y,q)$ to
$(x+c_{qx},y+c_{qy},q)$ modulo the lattice extents, equivalently
\begin{equation}
(P_S)_{\alpha((x+c_{qx})\bmod n_x,
                     (y+c_{qy})\bmod n_y,q),\,\alpha(x,y,q)}=1,
\label{eq:appendix-streaming}
\end{equation}
with all other entries in that column zero. This specifies the permutation
matrix $P_S$ uniquely.

\subsection{Degree-two monomial convention and endpoint assembly}

For $d=9n_xn_y$ population variables, the degree-two monomials are stored once,
without $\sqrt{2}$ normalization, in lexicographic order
\begin{equation}
\delta f^{[2]}
=
\bigl(\delta f_i\delta f_j\bigr)_{0\le i\le j<d},
\qquad
\Phi_2(\delta f)=
\begin{pmatrix}
\delta f\\
\delta f^{[2]}
\end{pmatrix}.
\label{eq:appendix-lift}
\end{equation}

The symmetric power $L^{\odot2}$ is understood in this same unnormalized
monomial basis and is defined by
\begin{equation}
(Lx)^{[2]} = L^{\odot2}x^{[2]} .
\label{eq:appendix-symmetric-power}
\end{equation}

Consequently
\begin{equation}
d_2=d+\binom{d+1}{2}.
\end{equation}
For the primary $3\times3$ periodic lattice, $d=81$ and therefore
\begin{equation}
d_2=81+\binom{82}{2}=3402.
\end{equation}

Writing the exact perturbation timestep as
\begin{equation}
F_{\mathrm{LB}}(\bar f+\delta f)-\bar f
=
L\,\delta f+Q_2\,\delta f^{[2]},
\end{equation}
where $Q_2$ is obtained by collecting the homogeneous quadratic polynomial
in the $i\le j$ monomial basis. For an off-diagonal monomial
$\delta f_i\delta f_j$, the coefficient includes the two symmetric bilinear
contributions before being stored once. The order-two endpoint is
\begin{equation}
  \begin{aligned}
  A_2
   &=C_2(F_{\mathrm{LB}})\\
   &=
   \begin{pmatrix}
   L&Q_2\\
   0&L^{\odot2}
   \end{pmatrix}\\
   &=C_2(P_S)\,C_2(F_C).
  \end{aligned}
  \label{eq:appendix-A2}
  \end{equation}

Because streaming is linear,
\begin{equation}
C_2(P_S)=\diag(P_S,P_S^{\odot2}).
\end{equation}
Collision is site local, so its linear block is assembled from identical
nine-population site blocks for the uniform reference, while its quadratic
block contains only monomials whose two population indices belong to the same
site. Equations
\cref{eq:appendix-d2q9,eq:appendix-M,eq:appendix-equilibrium,eq:appendix-rates,eq:appendix-streaming,eq:appendix-lift}
therefore uniquely specify the finite endpoint used in the calculations.

\bibliography{references}

\end{document}